\newtheorem{theorem}{Theorem}
\newtheorem{lemma}[theorem]{Lemma}
\theoremstyle{remark}
\tikzset{snake it/.style={decorate, decoration={snake, segment length=5pt, amplitude=1pt}}}
\colorlet{mylinkcolor}{violet}
\colorlet{mycitecolor}{YellowOrange}
\colorlet{myurlcolor}{Aquamarine}
\newcommand{\calC}{\mathcal{C}}
\newcommand{\calH}{\mathcal{H}}
\newcommand{\R}{\mathbb{R}}
\newcommand{\DEF}{\em}
\newcommand{\cVD}{\textsc{Cluster-VD}}
\newcommand{\cVDapx}{\textsc{Cluster-VD-apx}}
\newcommand{\VC}{\textsc{Vertex Cover}}
\DeclareMathOperator\OPT{OPT}
\let\ge\geqslant
\let\leq\leqslant
\begin{document}

\title{Improved Approximation Algorithms for Hitting $3$-Vertex Paths}

\author[S.~Fiorini]{Samuel Fiorini}
\address[S.~Fiorini]{\newline D\'epartement de Math\'ematique
\newline Universit\'e Libre de Bruxelles
\newline Brussels, Belgium}
\email{sfiorini@ulb.ac.be}

\author[G.~Joret]{Gwena\"el Joret}
\address[G.~Joret]{\newline D\'epartement d'Informatique
\newline Universit\'e Libre de Bruxelles
\newline Brussels, Belgium}
\email{gjoret@ulb.ac.be}

\author[O.~Schaudt]{Oliver Schaudt}
\address[O.~Schaudt]{\newline Institut f\"ur Informatik 
\newline Universit\"at zu K\"oln
\newline K\"oln, Germany}
\email{schaudto@uni-koeln.de}

\thanks{S. Fiorini is supported by ERC Consolidator Grant 615640-ForEFront. G. Joret is supported by an ARC grant from the Wallonia-Brussels Federation of Belgium.}

\thanks{A preliminary version of this paper appeared as an extended abstract in the Proceedings of the 18th Conference on Integer Programming and Combinatorial Optimization (IPCO '16)~\cite{FJS16}.}

\maketitle

\begin{abstract}
We study the problem of deleting a minimum cost set of vertices from a given vertex-weighted graph in such a way that the resulting graph has no induced path on three vertices. 
This problem is often called {\em cluster vertex deletion} in the literature and admits a straightforward $3$-approximation algorithm since it is a special case of the vertex cover problem on a $3$-uniform hypergraph. 
Recently, You, Wang, and Cao described an efficient $5/2$-approximation algorithm for the unweighted version of the problem. 
Our main result is a $9/4$-approximation algorithm for arbitrary weights, using the local ratio technique.  
We further conjecture that the problem admits a $2$-approximation 
algorithm and give some support for the conjecture.    
This is in sharp contrast with the fact that the similar problem of deleting vertices to eliminate all triangles in a graph is known to be UGC-hard to approximate to within a ratio better than $3$, as proved by Guruswami and Lee.
\end{abstract}

\section{Introduction} \label{sec:intro}

Graphs in this paper are finite, simple, and undirected. 
Given a graph $G$ and cost function $c : V(G) \to \R_+$, the \emph{cluster vertex deletion problem} (\cVD) is to find a minimum cost set $X$ of vertices such that each component of $G - X$ is a complete graph. Equivalently, $X \subseteq V(G)$ is a feasible solution if and only if $G - X$ contains no induced subgraph isomorphic to $P_{3}$, the path on three vertices.

The problem admits a staightforward $3$-approximation algorithm: Assuming unit costs for simplicity, build any inclusionwise maximal collection $\calC$ of vertex-disjoint induced $P_3$'s in $G$ and include in $X$ every vertex covered by some member of $\calC$. If $\calC$ contains $k$ subgraphs then we get a lower bound of $k$ on the optimum. On the other hand, the cost of $X$ is $3k$.

The problem also admits an approximation-preserving reduction from \VC: if $H$ is any given graph, let $G$ denote the graph obtained from $H$ by adding a pendant edge to every vertex. Then solving \VC{} on $H$ is equivalent to solving \cVD{} on $G$. Hence, known hardness and inapproximability results for \VC{} apply to \cVD{} as well, and in particular it is UGC-hard to approximate \cVD{} to within any ratio better than $2$. We show that we can however come close to $2$.

\begin{theorem} \label{thm:main}
\cVD{} admits a $9/4$-approximation algorithm.
\end{theorem}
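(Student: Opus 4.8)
The plan is to design a recursive algorithm and analyse it with the local ratio technique. The algorithm maintains a vertex‑weighted induced subgraph of $G$ with a residual cost $c\ge 0$, and proceeds as follows. If the current graph has no induced $P_3$ it is already a cluster graph, so we return $\emptyset$, which is optimal for every cost function. If some vertex $v$ has $c(v)=0$, we delete $v$, recurse, and re‑insert $v$ into the returned set only if feasibility requires it; deleting a vertex cannot increase the optimum and re‑inserting a zero‑cost vertex cannot increase the cost, so the ratio is preserved. Otherwise all costs are positive, and we carry out the main step: we locate in the current graph a \emph{witness configuration} $H$ — an induced subgraph taken from a fixed finite list — together with a weight function $w$ with $0\le w\le c$, supported on $V(H)$, and with $w(v)=c(v)$ for at least one $v\in V(H)$ (so that the recursion makes progress); we then recurse on $c-w$ and return the solution $X$ it produces. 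By the local ratio lemma it suffices that $X$ be a $9/4$‑approximation with respect to $w$, and since $w$ lives on $V(H)$ this is a purely local statement: writing $\mu_w(H)$ for the minimum $w$‑cost of a vertex set whose removal makes $G[V(H)]$ a cluster graph, it is enough that $w(V(H))\le \tfrac94\,\mu_w(H)$ for each configuration in the list (a minimality‑of‑$X$ refinement can relax this slightly where needed).

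Two things must then be secured: every graph with an induced $P_3$ contains a configuration from the list, and every configuration satisfies the local inequality. The second point is what forces the configurations to be chosen with care. A lone induced $P_3$ is useless: any nonzero $w$ on its three vertices has $\mu_w=w(\text{cheapest vertex})$ while $w(V(H))$ can be three times that, giving only ratio $3$. Richer configurations do better — an induced $C_4$ needs two deletions on four vertices (ratio $2$ with uniform $w$), and an induced $K_{1,t}$ with $t\ge 5$, weighted by $t-1$ on the centre and $1$ on each leaf, has $\mu_w=t-1$ and $w(V(H))=2t-1$, hence ratio $2+\tfrac1{t-1}\le\tfrac94$. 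However, a ``bull''‑type configuration on five vertices yields only $5/2$, even under the best non‑uniform weighting, so configurations of that kind must be excluded or enlarged; it is exactly this that makes $9/4$ harder to reach than $5/2$.

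Accordingly I would drive the case analysis from an induced $P_3$ $=a\,b\,c$, equivalently from a vertex $b$ whose closed neighbourhood is not a clique, branching on how the non‑adjacency propagates: on the independence number of $N(b)$ (a large independent set produces a good star configuration as above); on whether $a$ and $c$ have further common neighbours, and whether those are mutually adjacent (yielding a $C_4$ or a denser local piece); and on the neighbours of $a$ and $c$ lying outside $N[b]$, which either close into short paths/cycles that can be assembled into a $9/4$‑configuration or else expose additional structure to exploit. The cases that behave essentially like a \textsc{Vertex Cover} instance are absorbed with ratio $2$. The output of this analysis should be a short list of configurations, each equipped with a tailored (generally non‑uniform) weight function, covering all possibilities.

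The main obstacle is precisely this main step, in two parts. First, \emph{exhaustiveness}: one must show that a graph containing none of the chosen configurations is already a cluster graph (or falls into a trivially optimal case), which requires pinning down the local structure forced by the simultaneous absence of every configuration on the list. Second, the \emph{weight tuning}: for the delicate configurations one must find the right non‑uniform $w$ and verify $w(V(H))\le\tfrac94\,\mu_w(H)$ (with the minimality refinement where the raw inequality fails) — a finite but fiddly check, and the whole point of the construction is to keep this constant at $9/4$ rather than the easier $5/2$. Everything else — the recursion itself, the treatment of zero‑cost vertices, the local ratio lemma, and reducing to connected graphs — is routine.
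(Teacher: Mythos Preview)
Your framework---recursive local ratio, zero-cost vertex removal, a main step that produces a weighted induced subgraph $(H,w)$ with $w(X\cap V(H))\le\alpha\,\OPT(H,w)$ for every minimal hitting set $X$---is exactly the paper's. Your $C_4$ and $K_{1,t}$ examples are also special cases of what the paper uses. But two concrete ingredients are missing, and without them the plan does not close.

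First, the paper inserts a step you do not have: whenever two adjacent vertices $u,u'$ have identical neighbourhoods in $G-\{u,u'\}$ (\emph{true twins}), it merges them by transferring all of $c(u')$ onto $u$ and deleting $u'$; this preserves the optimum exactly (Lemma~\ref{lem:true_twins}) and is \emph{not} a local-ratio step. Twin-freeness is what guarantees that every edge of a clique has an external distinguisher, which is precisely what the later configurations rely on. For instance, in a $K_n$ with a single pendant, every bounded induced subgraph either contains the pendant edge (and then $\mu_w$ is realised by deleting one endpoint, forcing you to put essentially all the weight on those two vertices) or contains no induced $P_3$ at all; there is no useful configuration here until you contract the $n-1$ twins.

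Second, the paper's list of configurations is not a ``fixed finite list'' of graphs. After eliminating twins and induced $C_4$'s, it uses (i) a $K_5$ together with a \emph{distinguishing set} for it, weighted via Lemma~\ref{lem:Oli's_lemma} so that $c_H(V(H))=9$ and $\OPT(H,c_H)\ge 4$---this is the sole source of the constant $9/4$---and then (ii), once the graph is also $K_5$-free, the entire \emph{second neighbourhood} of a maximum-degree vertex $v_0$, with a carefully engineered weight function (Lemma~\ref{lem:2nd_neighborhood}) that achieves local ratio $2$. The latter subgraph is of unbounded size, and the argument is a multi-page case analysis on the clique structure of $G[N(v_0)]$, not a local expansion around a single $P_3$. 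Your $K_{1,t}$ is exactly the special case of (ii) where $N(v_0)$ is independent, but the general case needs the full machinery. The ``exhaustiveness'' you correctly flag as the obstacle is solved by this layered reduction (twins $\to$ $C_4$ $\to$ $K_5$+distinguishers $\to$ second neighbourhood), not by enumerating small induced subgraphs around a $P_3$; your sketch gives no indication of how to replace that.
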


We further conjecture that \cVD{} can be $2$-approximated in polynomial time, 
as is the case for \VC{}. We give some support for this conjecture in 
Section~\ref{sec:conclusion}, where we notice that our $9/4$-approximation algorithms 
is in fact a $2$-approximation algorithm for the case where the largest clique in 
the input graph has size at most $4$, and can be easily modified to a 
$2$-approximation algorithm if the input graph does not contain any diamond 
($K_4$ minus an edge) as an induced subgraph. 

In contrast, the problem of finding a minimum cost set of vertices $X$ such that $G-X$ has no triangle is known to be UGC-hard to approximate to within 
any ratio better than $3$, as proved by Guruswami and Lee~\cite{GuruswamiLee14} 
(see also Guruswami and Lee~\cite{GuruswamiLee15} for related inapproximability results). 

\paragraph{\bf Previous Work.}

\cVD{} was previously mostly studied in terms of fixed parameter algorithms. 
H\"uffner, Komusiewicz, Moser, and Niedermeier~\cite{HuffnerEtAl} first gave an $O(2^k k^9 + nm)$-time 
fixed-parameter algorithm, parameterized by the solution size $k$, 
where $n$ and $m$ denote the number of vertices and edges of the graph, 
respectively. 
This was subsequently improved by Boral, Cygan, Kociumaka, and Pilipczuk~\cite{BoralEtAl}, who gave a 
$O(1.9102^k(n+m))$-time algorithm. 
See also Iwata and Oka~\cite{IwataOka} for related results 
in the fixed parameter setting. 

As for approximation algorithms, nothing better than a $3$-approximation 
was known until the recent work of You, Wang, and Cao~\cite{YouEtAl}, who 
showed that the unweighted version of \cVD{} admits a $5/2$-approximation algorithm. 

In a previous version of this paper~\cite{FJS16}, we gave a $7/3$-approximation algorithm for \cVD{}. The algorithm in this version of the paper achieves a better approximation ratio and is at the same time much simpler. 

Finally, we note that there has been recent activity on another restriction of the vertex cover problem on $3$-uniform hypergraph, namely, the feedback vertex set problem in tournaments. For that problem, the $5/2$-approximation algorithm by Cai, Deng and Zang~\cite{CaiDengZang01} was the best known for many years, until the very recent work of Mnich, Vassilevska Williams and V\'egh~\cite{MVV15} who found a $7/3$-approximation algorithm for the problem. 

\paragraph{\bf Our approach.} 

Our approximation algorithm is based on the {\em local ratio} technique. 
In order to illustrate the general approach, let us give a very simple $2$-approximation algorithm for hitting all $P_3$-{\em subgraphs} (instead of induced subgraphs) in a given weighted graph $(G,c)$, see Algorithm~\ref{algo_subgraph} below.

\begin{algorithm}\caption{$\textsc{Hitting-$P_3$-subgraphs-apx}(G,c)$}\label{algo_subgraph} 
\begin{algorithmic}
\REQUIRE $(G,c)$ a weighted graph
\ENSURE $X$ an inclusionwise minimal set of vertices hitting all the $P_3$-subgraphs
\IF{$G$ has no $P_3$ subgraph}
  \STATE{$X \leftarrow \varnothing$}
\ELSIF{$(G,c)$ has some zero-cost vertex $u$}
	\STATE{$X' \leftarrow \textsc{Hitting-$P_3$-subgraphs-apx}(G -u,c \text{ restricted to } V(G-u))$}
	\STATE{$X \leftarrow X'$ if $G - X'$ has no $P_3$-subgraph; $X \leftarrow X' \cup \{u\}$ otherwise} 
\ELSE
  \STATE{$u \leftarrow $ vertex of degree $d(u) \geqslant 2$, and let $(H,c_{H})$ be the weighted star centered}        
  \STATE{\quad at $u$ with $V(H) := N(u) \cup \{u\}$, $c_{H}(u) := d(u) - 1$ and $c_{H}(v) := 1$ for $v \in N(u)$}
  \STATE{$\lambda^* \leftarrow$ maximum scalar $\lambda$ s.t.\ $c(v) - \lambda c_H(v) \geqslant 0$ for all $v \in V(H)$}
  \STATE{$X \leftarrow \textsc{Hitting-$P_3$-subgraphs-apx}(G,c-\lambda^* c_H)$}
\ENDIF
\STATE{return $X$}
\end{algorithmic}
\end{algorithm}
 
It can be easily verified that the set $X$ returned by Algorithm~\ref{algo_subgraph} is an inclusionwise minimal feasible solution.
The reason why the algorithm is a $2$-approximation is that the optimum cost for the weighted star $(H,c_{H})$ is $d(u) - 1$ while the solution $X$ returned by the algorithm misses at least one of the vertices of the star, and thus has a local cost of at most $2(d(u)-1)$.

We remark that a $2$-approximation algorithm for the problem of hitting $P_3$-subgraphs 
can also be obtained via 
a straightforward modification of the primal/dual $2$-approximation algorithm 
of Chudak {\it et al.}~\cite{chudak} for the feedback vertex set problem. 
(Indeed, this is exactly what was done by Tu and Zhou~\cite{tu}.) 
However, the resulting algorithm is much more complicated than Algorithm~\ref{algo_subgraph}.

It is perhaps worth pointing out that, in the case of triangle-free graphs, 
hitting $P_3$'s or induced $P_3$'s are the same problem. This was actually 
an important insight for the $5/2$-approximation algorithm of You, Wang, and Cao~\cite{YouEtAl}. However, for arbitrary graphs the induced version of the problem seems much more difficult. Nevertheless, we are tempted to take the simplicity of 
Algorithm~\ref{algo_subgraph} as a hint that the local ratio technique 
is a good approach to attack the problem. 

From a high level point of view, the structure of our $9/4$-approximation algorithm for \cVD{} 
is as follows: 
As long as there is an induced $P_3$ in the graph, 
either we can apply a reduction operation (identifying {\em true twins}) 
that does not change the optimum, or we find some induced subgraph $H$ 
and decrease the weights of its vertices in $(G,c)$ proportionally to a carefully 
chosen weighting $c_H$ for the vertices of $H$, ensuring a local ratio of $9/4$. 
(We remark that $c_H$ depends on $H$ only and is thus independent of the weights of vertices in $G$, similarly as in Algorithm~\ref{algo_subgraph}.)

The induced subgraphs we consider are as follows: \emph{cycles of length~$4$} 
($C_4$'s), \emph{$5$-cliques plus distinguishing sets} ($K_5$'s plus distinguishing 
sets), and \emph{second-neighborhood} subgraphs induced by the vertices at
distance at most two from a maximum degree vertex of $G$. 
We note that the approximation algorithm in the preliminary version of this paper~\cite{FJS16} has the same general structure but exploits a different set of induced subgraphs, namely a finite (but longish) list of graphs on at most $7$ vertices. 
Using the new set of induced subgraphs results in both simpler proofs and a better approximation ratio of $9/4$.

\section{Definitions and Preliminaries} \label{sec:def_prelim}

Let $G$ be a graph. Recall that the feasible solutions to \cVD{} in $G$ are the sets of vertices $X$ that intersect every induced subgraph isomorphic to $P_3$. For this reason, we call such sets $X$ {\DEF hitting sets} of $G$. We denote by $\OPT(G)$ the minimum size of a hitting set of $G$. The definitions extend naturally in the weighted setting: Given a weighted graph $(G, c)$, where $c: V(G) \to \R_+$, we let $\OPT(G, c)$ denote the minimum weight (cost) of a hitting set of $G$. As expected, the \emph{weight} (or \emph{cost}) of set $X \subseteq V(G)$ is defined as $c(X) := \sum_{v \in X} c(v)$.

For $X \subseteq V(G)$, the subgraph of $G$ induced by $X$ is denoted by $G[X]$. When $H$ is an induced subgraph of $G$ or isomorphic to an induced subgraph of $G$, we sometimes say that \emph{$G$ contains $H$}. If $G$ does not contain $H$, we also say that $G$ is \emph{$H$-free}.

For $v \in V(G)$, the neighborhood of $v$ is denoted by $N(v)$. From time to time, to indicate that $x$ is a neighbor of $y$, we simply say that \emph{$x$ sees $y$}.

\section{Tools} \label{sec:tools}

\subsection{True Twins and Distinguishers} \label{sec:true_twins}

Two vertices $u, u'$ of a graph $G$ are called {\DEF true twins} if they are adjacent and have the same neighborhood in $G - \{u,u'\}$. True twins have a particularly nice behavior regarding \cVD, as proved in our next lemma. This is our first main technical tool.

\begin{lemma} \label{lem:true_twins}
Let $(G,c)$ be a weighted graph and $u, u' \in V(G)$ be true twins. Let $(G',c')$ denote the weighted graph obtained from $G$ by transferring the whole cost of $u'$ to $u$ and then deleting $u'$, that is, let $G' := G - u'$ and $c'(v) := c(v)$ if $v \in V(G'), v \neq u$ and $c'(v) := c(u) + c(u')$ if $v = u$. Then $\OPT(G,c) = \OPT(G',c')$. 
\end{lemma}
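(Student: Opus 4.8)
The plan is to establish the two inequalities $\OPT(G',c') \le \OPT(G,c)$ and $\OPT(G,c) \le \OPT(G',c')$ separately, each time by transforming an optimal hitting set on one side into a hitting set of no larger cost on the other. All the graph-theoretic content is concentrated in one observation about re-inserting a true twin: if $Y \subseteq V(G)$ avoids both $u$ and $u'$ and $G - u' - Y$ has no induced $P_3$, then $G - Y$ has no induced $P_3$ either (and symmetrically with the roles of $u$ and $u'$ exchanged). I would prove this by considering the component $K$ of $u$ in $G - u' - Y$, which is a clique; since $N(u) \setminus \{u,u'\} = N(u') \setminus \{u,u'\}$, the neighbours of $u'$ in $G - Y$ are exactly $\{u\}$ together with the neighbours of $u$ in $G - u' - Y$, i.e.\ exactly $K$. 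Hence re-inserting $u'$ merely enlarges the clique $K$ to the clique $K \cup \{u'\}$ and leaves every other component untouched, so every component of $G - Y$ is again a clique.

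Given this, $\OPT(G,c) \le \OPT(G',c')$ is quick. Let $X'$ be an optimal hitting set of $(G',c')$. If $u \in X'$, set $X := X' \cup \{u'\}$; then $G - X = G' - X'$ is $P_3$-free, and $c(X) = c'(X')$ because the cost $c(u')$ that $c'$ had moved onto $u$ is restored to $u'$. If $u \notin X'$, set $X := X'$; since $u, u' \notin X'$ and $G' - X' = G - u' - X'$ is $P_3$-free, the observation gives that $G - X'$ is $P_3$-free, and $c(X) = c'(X')$ since $c$ and $c'$ agree off $u$.

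For the reverse inequality, let $X$ be an optimal hitting set of $(G,c)$ and split into cases according to $X \cap \{u,u'\}$. If $\{u,u'\} \subseteq X$, take $X' := X \setminus \{u'\}$; then $G' - X' = G - X$ is $P_3$-free and $c'(X') = c(X)$ (the cost bookkeeping at $u$ is exact). If $u, u' \notin X$, take $X' := X$, which is a hitting set of $G'$ as an induced subgraph of the $P_3$-free graph $G - X$, with the same cost. The remaining case, $u \in X$ and $u' \notin X$ (the case $u' \in X$, $u \notin X$ being symmetric), is where the observation is used: I claim $Y := X \setminus \{u\}$ is already a hitting set of $G$. Indeed $Y$ avoids both $u$ and $u'$ and $G - u - Y = G - X$ is $P_3$-free, so the symmetric form of the observation yields that $G - Y$ is $P_3$-free; then $X' := Y$ is a hitting set of $G'$ (an induced subgraph of $G - Y$) with $c'(X') = c(Y) = c(X) - c(u) \le c(X)$. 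Combining the two inequalities gives $\OPT(G,c) = \OPT(G',c')$.

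The only point requiring care is the re-insertion observation — specifically the claim that after deleting $Y$ the vertex $u'$ sees exactly the clique $K$. This is precisely where being \emph{true} twins (adjacent, with identical neighbourhoods off $\{u,u'\}$) is essential, and it also explains why the analogous statement fails if one tries to delete $u$ and $u'$ simultaneously: with both removed there is no surviving twin to pin the neighbourhood down to a single component. Everything else is routine case analysis together with the elementary fact that a graph has no induced $P_3$ iff each of its components is a clique.
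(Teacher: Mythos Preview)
Your proof is correct and follows essentially the same approach as the paper: both directions exploit the interchangeability of the true twins $u,u'$, and your ``re-insertion observation'' is the clique-component reformulation of the paper's remark that no induced $P_3$ contains both $u$ and $u'$. The paper compresses your three-case analysis for $\OPT(G',c') \le \OPT(G,c)$ into the single sentence that an inclusionwise minimal hitting set of $G$ contains both twins or neither, but the content is the same.
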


\begin{proof}
We have $\OPT(G,c) \leqslant \OPT(G',c')$ because every hitting set $X'$ of $G'$ yields a hitting set $X$ of $G$ with the same cost: we let $X := X' \cup \{u'\}$ if $X$ contains $u$ and $X := X'$ otherwise. 
Here we use that no induced $P_3$ in $G$ contains both $u$ and $u'$.

Conversely, we have $\OPT(G',c') \leqslant \OPT(G,c)$ because any inclusionwise minimal cost hitting set $X$ of $G$ either contains both of the true twins $u$ and $u'$, or none of them.
\end{proof}

If $G$ does not contain any pair of true twins, we say that $G$ is \emph{twin-free}.

Notice that two adjacent vertices $u$ and $v$ are \emph{not} true twins if and only if $G$ has an induced $P_3$ containing $u$ and $v$. The third vertex of such a $P_3$ is adjacent to one of $u$ and $v$, and nonadjacent to the other. We say that it is a \emph{distinguisher} for the edge $uv$, and call the induced $P_3$ a \emph{distinguishing} $P_3$.

Now let $S \subseteq V(G)$. A set $D \subseteq V(G)$ disjoint from $S$ is said to be a \emph{distinguishing set} for $S$ if for every edge $uv$ whose endpoints are true twins in $G[S]$, the set $D$ contains a distinguisher $w$ for the edge $uv$. 
See Figure~\ref{fig:distinguishing_set} for an illustration. 

\begin{figure}
\centering
\includegraphics[width=0.14\textwidth]{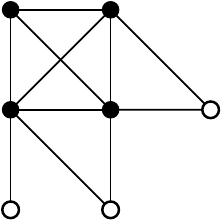}
\caption{The white vertices form a distinguishing set for the black vertices.\label{fig:distinguishing_set}} 
\end{figure}

\begin{lemma} \label{lem:Oli's_lemma}
Let $H$ be a graph whose vertex set is partitioned into a clique $C$ and a distinguishing set $D$ for $C$. Then, there exists a weight function $c_H : V(H) \to \mathbb{Z}_{\geqslant 0}$ such that $c_H(v) = 1$ for all $v \in C$, $\sum_{v \in D} c_H(v) = |C| - 1$ and every set $X \subseteq V(H)$ hitting each distinguishing $P_3$ has weight $c_H(X) \geqslant |C| - 1$. In particular, $\OPT(H,c_H) \geqslant |C| - 1$.
\end{lemma}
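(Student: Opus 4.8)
We want to find a nonnegative integer weighting $c_H$ on $V(H)$ with $c_H(v)=1$ on the clique $C$, with $\sum_{v\in D}c_H(v)=|C|-1$, and such that every set $X$ hitting all distinguishing $P_3$'s of $H$ satisfies $c_H(X)\geq |C|-1$. The natural plan is to think of this as an LP-duality / covering statement: the constraint "$c_H(X)\ge |C|-1$ for every hitting set $X$" is exactly the statement that the minimum-weight hitting set has value at least $|C|-1$, and we can certify such a lower bound by exhibiting a fractional packing of distinguishing $P_3$'s of total value $|C|-1$ that is "dominated" by $c_H$ in the appropriate sense. But since we get to choose $c_H$ as well, there is a lot of freedom, and the cleanest route is a direct combinatorial argument.

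**The key idea: charge clique vertices to distinguishers.**

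First I would observe that if $X$ hits every distinguishing $P_3$, then for every edge $uv$ inside $C$ whose endpoints are true twins in $G[C]$ — which, since $C$ is a clique, means \emph{every} edge $uv$ of $C$ unless $u$ or $v$ already lies in $X$ — either $X$ contains one of $u,v$, or $X$ contains a distinguisher $w\in D$ for $uv$. Equivalently: if $X\cap C$ omits two vertices $u,u'$ of $C$, then since $u,u'$ are true twins in $G[C]$ but (by the distinguishing-set hypothesis) $D$ contains a distinguisher $w$ for $uu'$, and $w\notin X$ would leave the distinguishing $P_3$ on $\{u,u',w\}$ unhit, we must have $w\in X\cap D$. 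Running this over a fixed vertex $u_0\in C\setminus X$ (if $C\not\subseteq X$) and all other $u'\in C\setminus X$, I would extract from $X\cap D$ enough distinguishers to conclude $|X\cap C| + |X\cap D| \ge |C|-1$, because each uncovered pair $\{u_0,u'\}$ forces a distinguisher in $X$, and distinct $u'$ can in principle share a distinguisher, so I need to be more careful here — this is the main obstacle, see below. In the simplest accounting, if all of $C$ is in $X$ we are done trivially with $c_H(X)\ge |C| > |C|-1$; otherwise fix $u_0\notin X$ and note every $u'\in C\setminus(X\cup\{u_0\})$ needs a distinguisher for $u_0u'$ in $X\cap D$.

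**Defining $c_H$ on $D$ and handling shared distinguishers.**

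The subtlety is that one vertex $w\in D$ may distinguish many edges $u_0u'$, so a naive "one distinguisher per uncovered clique vertex" charging over-counts, and we must instead \emph{weight} the distinguishers. The plan is to choose, for each edge $uv$ of $C$ that has a distinguisher in $D$, one canonical distinguisher, and to set $c_H(w)$ to be (something like) the number of clique vertices $u$ such that $w$ is the canonical distinguisher for some edge incident to $u$ in a suitable auxiliary structure — essentially build an auxiliary graph or hypergraph on $C$ whose "edges" are labeled by distinguishers, pick a spanning structure, and let $c_H(w)$ count how many tree-edges $w$ is responsible for. A spanning tree of the "distinguishable" part of $C$ has exactly $|C|-1$ edges (using that $C$ is a clique, it is connected, and by hypothesis every non-twin edge — here every edge, since all of $C$'s edges join true twins in $G[C]$ unless... — actually every edge of $C$ has a distinguisher in $D$ by the definition of distinguishing set applied to $G[C]$). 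Assigning $c_H(w)=$ number of spanning-tree edges for which $w$ is the chosen distinguisher gives $\sum_{w\in D}c_H(w)=|C|-1$ automatically, and integrality and nonnegativity are immediate. Then for any hitting set $X$: contract the spanning-tree edges both of whose distinguishers... no — rather, argue that the tree-edges not "killed" by $X\cap C$ (i.e. with both endpoints outside $X$) must have their assigned distinguisher in $X$, so $c_H(X\cap D)\ge$ (number of surviving tree-edges) $\ge (|C|-1)-|X\cap C|$, giving $c_H(X)\ge |C|-1$. The crucial counting fact is that deleting $|X\cap C|$ vertices from a tree on $|C|$ vertices leaves at least $(|C|-1)-|X\cap C|$ edges, which is elementary.

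**Expected main obstacle.**

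The one genuine obstacle is making the charging to distinguishers \emph{tight} and well-defined simultaneously: we need $\sum_{v\in D}c_H(v)$ to equal $|C|-1$ \emph{exactly}, not just at least, while keeping the hitting-set lower bound. The spanning-tree construction is what makes the "$=|C|-1$" exact, and it is the right level of generality because a distinguisher may serve several tree-edges (so $c_H(w)$ can exceed $1$) but the total is pinned down by the tree having $|C|-1$ edges. I would double-check the edge case $|C|=1$ (then $D$ can be empty, $\sum_{v\in D}c_H(v)=0=|C|-1$, and the bound $c_H(X)\ge 0$ is vacuous) and the case where $C$ is a clique but some of its edges might not need a distinguisher — but in fact, by the definition of "distinguishing set for $C$", every edge $uv$ of $G[C]$ whose endpoints are true twins in $G[C]$ has a distinguisher in $D$, and in a clique all pairs are adjacent with equal neighborhoods inside $C$, so this is exactly all edges of $C$; hence the "distinguishable part" is all of $C$ and a genuine spanning tree exists. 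The final line "$\OPT(H,c_H)\ge |C|-1$" then follows since any feasible solution to \cVD{} in $H$ in particular hits every distinguishing $P_3$.
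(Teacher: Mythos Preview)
Your spanning-tree construction does not work, and the failure is precisely in what you call the ``crucial counting fact.'' It is simply false that deleting $k$ vertices from a tree on $n$ vertices leaves at least $n-1-k$ edges: delete the center of a star and all $n-1$ edges disappear. This is not just a gap in the argument; the weight function you build can genuinely violate the conclusion of the lemma. Take $C=\{1,2,3,4\}$ and $D=\{a,d,e\}$ with $N(a)\cap C=\{1\}$, $N(d)\cap C=\{2,3\}$, $N(e)\cap C=\{3\}$; one checks that $D$ distinguishes every edge of $C$. Using the star at vertex~$1$ as spanning tree and assigning all three tree edges $12,13,14$ to their common distinguisher $a$ yields $c_H(a)=3$, $c_H(d)=c_H(e)=0$. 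But $X=\{1,d,e\}$ hits every distinguishing $P_3$ (vertex~$1$ covers all $P_3$'s through $a$; the vertices $d,e$ cover the rest) while $c_H(X)=1<3=|C|-1$. Even a Hamiltonian path does not save you without further care: with the path $1\text{--}2\text{--}3\text{--}4$ and the assignment $12\mapsto a$, $23\mapsto e$, $34\mapsto e$, the set $X=\{3,a,d\}$ hits every distinguishing $P_3$ but has $c_H(X)=2$.

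The paper's construction avoids this via a different mechanism. It first shows that, for any fixed $w\in D$, the edges of $C$ distinguished by $w$ and by no other vertex of $D$ form a \emph{matching} $M_w$. One sets $c_H(w):=|M_w|$, removes $w$ from $D$ and one endpoint of each edge of $M_w$ from $C$, and iterates. The lower bound $c_H(X)\geq |C|-1$ is then proved by induction along this deletion sequence, splitting on whether the first chosen $w$ lies in $X$: if $w\notin X$, the matching property guarantees that $X$ contains $|M_w|=c_H(w)$ pairwise distinct clique vertices, one per edge of $M_w$, which is exactly what your tree argument lacks. In your construction the tree edges incident to a single clique vertex need not be a matching, so one vertex of $X\cap C$ can kill many of them at once.
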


\begin{proof}
First, we claim that for every fixed $w \in D$, the set of edges $uv$ of $C$ that are distinguished by $w$ and by no other vertex of $D$ forms a matching. Indeed, assume that $C$ has two incident edges $uv$ and $uv'$ that are distinguished by $w$ but are not distinguished by any other vertex of $D$. 
Then either $w$ is adjacent to both $v$ and $v'$, or to none of them. 
Thus $w$ does not distinguish the edge $vv'$. Let $w' \in D$ be any vertex distinguishing the edge $vv'$. Then $w'$ is a distinguisher of $uv$ or $uv'$ that is distinct from $w$, a contradiction.

Next, we define the weight function $c_H$ by the following iterative procedure.

\begin{itemize}
\item Pick any distinguisher $w \in D$.
\item Let $M$ denote the edges of $C$ that are distinguished by $w$ and by no other vertex of $D$. By the claim, $M$ is a matching. Define $c_H(w) := |M|$.  
\item Let $U$ be any set of $|M|$ vertices hitting each edge of $M$ exactly once. Delete the vertices of $U$ from $C$, delete $w$ from $D$, and repeat until there are no more vertices in $D$.
\end{itemize}

Notice that at each step $D$ remains a distinguishing set for $C$. Notice also that the graph obtained after deleting $w$ from the distinguishing set and $U$ from the clique does not depend on the particular choice of $U$. Indeed, all the possible choices for $U$ lead to isomorphic graphs since $w$ gets deleted.

Finally, we show that every set $X \subseteq V(H)$ hitting all the distinguishing $P_3$'s has weight at least $|C| - 1$, by induction. 

Let $w$ denote the first distinguisher picked by the weighting procedure and the corresponding set $U$. If $w  \in X$, consider the reduced instance $C' := C \setminus U$, $D' := D \setminus \{w\}$. It is true that $X - w$ hits all the distinguishing $P_3$'s for this new instance. By induction, we get $c_H(X) = c_H(X \setminus \{w\}) + c_H(w) \ge |C'| - 1 + c_H(w) = |C| - c_H(w) - 1 + c_H(w) = |C| - 1$. 

Now assume that $w \notin X$. Thus $X$ meets each edge of $M$ at least once. Let $R \subseteq X$ be any set meeting each edge of $M$ exactly once. By the remark above, we may assume that $U = R$. As before, consider the reduced instance $C' := C \setminus U$, $D' := D \setminus \{w\}$. Clearly, $X \setminus U$ hits all the distinguishing $P_3$'s for this instance. By induction, we get $c_H(X) = c_H(X \setminus U) + c_H(U) = c_H(X \setminus U) + c_H(w) \ge |C'| - 1 + c_H(w) = |C| - c_H(w) - 1 + c_H(w) = |C| - 1$.
\end{proof}

\subsection{$\alpha$-Good Induced Subgraphs} \label{sec:good_H}

Given a graph $G$, an induced subgraph $H$ of $G$, and a weighting $c_{H} : V(H) \to \mathbb{R}_+$, we say that $(H, c_{H})$ is {\DEF $\alpha$-good in $G$} if for every inclusionwise minimal hitting set $X$ of $G$ we have
\begin{equation} \label{eq:alpha_good}
\sum_{v\in X\cap V(H)} c_{H}(v) \leqslant \alpha \cdot \OPT(H, c_H)\,.
\end{equation}
Moreover, we say that an induced subgraph $H$ of $G$ is itself {\DEF $\alpha$-good in $G$} if there exists a weighting $c_{H}$ such that $(H,c_{H})$ is $\alpha$-good. 

We start by considering two different types of weighted induced subgraphs $(H,c_{H})$ that satisfy the stronger condition $\sum_{v \in V(H)} c_{H}(v) \leqslant \alpha \cdot \OPT(H, c_H)$, which obviously implies that they are $\alpha$-good.

\begin{lemma} \label{lem:C_4}
Let $G$ be a graph. If $H$ is an induced $C_4$ in $G$, then $H$ is $2$-good.
\end{lemma}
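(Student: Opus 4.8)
The plan is to exhibit a concrete weighting and verify the stronger condition $\sum_{v\in V(H)}c_H(v)\le 2\cdot\OPT(H,c_H)$ highlighted just before the lemma; since weights are nonnegative and $X\cap V(H)\subseteq V(H)$ for every hitting set $X$ of $G$, this immediately implies \eqref{eq:alpha_good} with $\alpha=2$ and hence that $H$ is $2$-good. The natural (and, as it turns out, forced) choice is the uniform weighting $c_H(v):=1$ for all $v\in V(H)$, so that $\sum_{v\in V(H)}c_H(v)=4$, and the whole argument reduces to computing $\OPT(H,c_H)$.

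Writing the $C_4$ as $abcda$ so that $ac$ and $bd$ are the two non-edges, I would first list the induced $P_3$'s of $H$, namely $abc$, $bcd$, $cda$, $dab$, and then show $\OPT(H,c_H)=2$. For the upper bound, $\set{a,c}$ is a hitting set, since $H-\set{a,c}$ consists of the two non-adjacent vertices $b$ and $d$ and is therefore $P_3$-free. For the matching lower bound, deleting any single vertex of $H$ leaves an induced $P_3$ (e.g.\ deleting $a$ leaves $bcd$), so no hitting set of size $1$ exists; as the costs are unit, $\OPT(H,c_H)\ge 2$. Combining the two bounds gives $\sum_{v\in V(H)}c_H(v)=4=2\cdot\OPT(H,c_H)$, which is exactly the stronger condition, and the lemma follows.

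I do not expect any real obstacle here: the statement is a finite case check on a four-vertex graph, and the only decision to make—the choice of $c_H$—is essentially dictated by the fact that for a $C_4$ the $P_3$-hitting optimum ($2$) is exactly half the total number of vertices ($4$), so uniform weights attain the ratio with equality. The one point worth stating cleanly in the write-up is why a single deleted vertex never destroys all induced $P_3$'s of the $C_4$, which is immediate from the symmetry of the cycle.
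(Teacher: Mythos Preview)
Your proposal is correct and is exactly the paper's argument: take $c_H\equiv 1$, note $\OPT(H,c_H)=2$, and conclude via the stronger inequality $\sum_{v\in V(H)}c_H(v)=4=2\cdot\OPT(H,c_H)$. Your write-up simply spells out the computation of $\OPT(H,c_H)$ that the paper leaves implicit.
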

\begin{proof}
We let $c_H(v) := 1$ for all $v \in V(H)$. Then $\OPT(H,c_H) = 2$ and
$$
\sum_{v \in V(H)} c_H(v) = 4 = 2 \cdot \OPT(H,c_H)\,.
$$
\end{proof}

\begin{lemma} \label{lem:5-clique}
Let $G$ be a twin-free graph, let $C$ be a $5$-clique in $G$ and let $D$ be a distinguishing set for $C$. The induced subgraph $H := G[C \cup D]$ is $\alpha$-good in $G$ for $\alpha = 9/4$.
\end{lemma}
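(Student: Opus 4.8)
The plan is to prove the stronger, unconditional inequality $\sum_{v\in V(H)} c_H(v) \le \frac{9}{4}\,\OPT(H,c_H)$ singled out just before the lemma statement; this yields $9/4$-goodness immediately, because $c_H$ is nonnegative and $X\cap V(H)\subseteq V(H)$ for every hitting set $X$ of $G$, so $\sum_{v\in X\cap V(H)} c_H(v)\le\sum_{v\in V(H)} c_H(v)$. The point that makes this work is that $H$ has exactly the shape handled by Lemma~\ref{lem:Oli's_lemma}: its vertex set is partitioned into the clique $C$ and the set $D$, and $D$ is a distinguishing set for $C$ not only in $G$ but also inside $H$, since all the distinguishers that are needed lie in $D\subseteq V(H)$ and adjacency is inherited from $G$. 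So the first step is to apply Lemma~\ref{lem:Oli's_lemma} to $H$, which produces a weighting $c_H\colon V(H)\to\mathbb{Z}_{\ge 0}$ with $c_H(v)=1$ for every $v\in C$, with $\sum_{v\in D} c_H(v)=|C|-1$, and such that every subset of $V(H)$ meeting all distinguishing $P_3$'s of $H$ has $c_H$-weight at least $|C|-1$.

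The second step is a short count. First, the total weight is
$$\sum_{v\in V(H)} c_H(v) \;=\; \sum_{v\in C} c_H(v) + \sum_{v\in D} c_H(v) \;=\; |C| + (|C|-1) \;=\; 2|C|-1\,.$$
Second, every hitting set of $H$ meets every induced $P_3$ of $H$, in particular every distinguishing $P_3$, so Lemma~\ref{lem:Oli's_lemma} gives $\OPT(H,c_H)\ge|C|-1$. Since $|C|=5$, these two facts read $\sum_{v\in V(H)} c_H(v)=9$ and $\OPT(H,c_H)\ge 4$, and therefore $\sum_{v\in V(H)} c_H(v)=9=\frac{9}{4}\cdot 4\le\frac{9}{4}\,\OPT(H,c_H)$, which is the claimed inequality. (More generally the same computation gives the ratio $\frac{2|C|-1}{|C|-1}$, which is at most $9/4$ precisely when $|C|\ge 5$; this is why the $5$-clique is the borderline case and why the two ``bad'' small configurations --- $C_4$'s and cliques of size at most $4$ --- must be treated separately.)

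I do not expect a real obstacle here: the genuine combinatorial work --- constructing the weighting and proving that $c_H(X)\ge|C|-1$ for every set $X$ hitting the distinguishing $P_3$'s --- has already been carried out in Lemma~\ref{lem:Oli's_lemma}, and the twin-freeness of $G$ is not actually used once $C$ and $D$ have been fixed. The only points needing a little care are bookkeeping ones: that the weights on $C$ and on $D$ add up to $2|C|-1$, and that it suffices to bound $\OPT(H,c_H)$ from below by $|C|-1$ rather than to evaluate it exactly, since a larger value of $\OPT(H,c_H)$ only makes the inequality easier.
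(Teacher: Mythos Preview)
Your proof is correct and follows exactly the same approach as the paper: apply Lemma~\ref{lem:Oli's_lemma} to obtain the weighting $c_H$, compute $\sum_{v\in V(H)} c_H(v) = |C| + (|C|-1) = 9$, and combine with $\OPT(H,c_H) \ge |C|-1 = 4$ to get the stronger inequality $\sum_{v\in V(H)} c_H(v) \le \tfrac{9}{4}\,\OPT(H,c_H)$.
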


\begin{proof}
With the weight function $c_H$ defined in Lemma~\ref{lem:Oli's_lemma}, we have
$$
\sum_{v \in V(H)} c_H(v) 
= |C| + |C| - 1
= 9 
\leqslant (9/4) \cdot \OPT(H,c_H)\,.
$$
\end{proof}

The next lemma is our main tool for constructing $\alpha$-good weighted induced
subgraphs for $\alpha = 2$. This time, we use the minimality of the hitting set $X$
to establish $\alpha$-goodness, however in a very simple way.

\begin{lemma} \label{lem:2nd_neighborhood}
Let $G$ be a graph that is twin-free, $C_4$-free and $K_5$-free. 
Let $v_0$ be a vertex of maximum degree, and let $A_1$, \ldots, $A_k$ denote the components of $G[N(v_0)]$.  For $i \in [k]$, let $B_i$ denote the set of vertices in $G - (\{v_0\} \cup N(v_0))$ that see at least one vertex in $A_i$. Let $H$ denote the subgraph of $G$ induced by $\{v_0\} \cup N(v_0) \cup \bigcup_{i=1}^k B_i$. Then there exists a weight function $c_H : V(H) \to \mathbb{Z}_{\geqslant 0}$ such that $(H,c_{H})$ is $2$-good in $G$.
\end{lemma}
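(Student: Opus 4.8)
The plan is to choose the weight function $c_H$ so that the "easy" vertices — namely $v_0$ and the vertices of $N(v_0)$ — carry almost all of the weight, and then use the minimality of the hitting set $X$ to argue that $X$ cannot take too many of them. Concretely, I would set $c_H(v_0) := d(v_0) - 1$ (where $d(v_0) = |N(v_0)|$), $c_H(v) := 1$ for each $v \in N(v_0)$, and $c_H(v) := 0$ for every $v \in \bigcup_i B_i$. Since $\{v_0\}$ together with the star edges from $v_0$ form induced $P_3$'s whenever $d(v_0) \ge 2$, the $c_H$-weight of any hitting set of $H$ is at least $d(v_0) - 1$, so $\OPT(H,c_H) \ge d(v_0) - 1$; I expect equality, with the optimum achieved by deleting $v_0$ together with one vertex per component $A_i$ of $G[N(v_0)]$ — this is a hitting set of $H$ because each $A_i$ is a clique (I'll need to verify that: since $G$ is $C_4$-free and $K_5$-free, and $v_0$ has maximum degree, each component $A_i$ of $G[N(v_0)]$ is a clique of size at most $4$, and the $B_i$'s attach in a controlled way). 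So the target is to show $\sum_{v \in X \cap V(H)} c_H(v) \le 2(d(v_0) - 1)$ for every inclusionwise minimal hitting set $X$ of $G$.

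The main case analysis is on whether $v_0 \in X$. If $v_0 \notin X$, then by minimality of $X$ there is no benefit arguing directly; instead I would observe that $G[N(v_0)]$ must itself be hit by $X$, and more importantly that since $v_0$ survives, $X$ must contain, for each component $A_i$ of $G[N(v_0)]$, enough vertices to kill every induced $P_3$ using $v_0$ and two vertices of distinct components — but distinct components $A_i, A_j$ are nonadjacent, so $v_0$ plus one vertex from $A_i$ and one from $A_j$ is always an induced $P_3$; hence $X$ contains all but at most one of the components $A_i$ entirely in its complement is impossible — rather, $X \cap N(v_0)$ must be a vertex cover of the "co-bipartite-like" structure, forcing $|X \cap N(v_0)| \ge d(v_0) - (\text{number of components})$. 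Combining with $c_H(v_0) = 0$ in this subcase gives $\sum c_H \le |X \cap N(v_0)| \le d(v_0)$, which is within $2(d(v_0)-1)$ once $d(v_0) \ge 2$. If $v_0 \in X$, then $c_H(v_0) = d(v_0) - 1$ is already paid, and I must bound $|X \cap N(v_0)| \le d(v_0) - 1$, i.e. show $X$ cannot contain all of $N(v_0)$: this is exactly where minimality enters — if $N(v_0) \subseteq X$ then $v_0$ would be removable from $X$ unless some induced $P_3$ of $G$ has $v_0$ as its unique vertex in $X$, and such a $P_3$ would consist of $v_0$ and two neighbors, both of which lie in $N(v_0) \subseteq X$, a contradiction. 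Hence $|X \cap N(v_0)| \le d(v_0) - 1$, so $\sum_{v \in X \cap V(H)} c_H(v) \le (d(v_0)-1) + (d(v_0)-1) = 2(d(v_0)-1) = 2\,\OPT(H,c_H)$.

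The step I expect to be the real obstacle is the subcase $v_0 \notin X$: I need a clean argument that $X \cap N(v_0)$ is large enough, and here the maximum-degree hypothesis on $v_0$ together with $C_4$-freeness and $K_5$-freeness should be doing the work — in particular ensuring the components $A_i$ are cliques of bounded size and that the bound $|X \cap N(v_0)| \ge d(v_0) - k$ combines correctly with the structure of the $B_i$'s (vertices in $B_i$ may be needed in $X$ to hit $P_3$'s inside $G[A_i \cup B_i]$, but those contribute $0$ to $c_H$, so they are harmless). I would close by checking the trivial boundary case $d(v_0) \le 1$, where $G$ has no induced $P_3$ at all and the statement is vacuous, and then assembling the two subcases into the inequality $\sum_{v \in X \cap V(H)} c_H(v) \le 2\,\OPT(H,c_H)$, which is precisely $2$-goodness.
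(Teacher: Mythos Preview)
Your weight function does not work, and the gap is exactly at the step you flagged as worrying but for a different reason: the lower bound $\OPT(H,c_H)\ge d(v_0)-1$ is false with your weights. You are implicitly reusing the star argument from Algorithm~\ref{algo_subgraph}, but that argument is for $P_3$ \emph{subgraphs}; for induced $P_3$'s the path $u\,v_0\,w$ only counts when $u$ and $w$ are nonadjacent. Concretely, take $k=1$ with $A_1$ a clique (say a triangle $\{a,b,c\}$) and $B_1=\{d,e,f\}$ where $d,e,f$ are pairwise nonadjacent and each sees a different vertex of $A_1$ (this is twin-free, $C_4$-free and $K_5$-free). With your weights, the set $B_1$ is a hitting set of $H$ of cost $0$, so $\OPT(H,c_H)=0$; yet $X=\{a,b,c\}$ is an inclusionwise minimal hitting set of $G$ with $c_H(X)=3$. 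So $(H,c_H)$ is not $2$-good --- in fact not $\alpha$-good for any finite $\alpha$.

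A second error feeds into this: it is not true that the components $A_i$ of $G[N(v_0)]$ are cliques. $C_4$-freeness of $G$ does not force this (e.g.\ $A_1$ can be a $P_3$, giving a diamond on $\{v_0\}\cup A_1$), and the paper's proof explicitly treats the non-clique cases. The actual argument is a substantial case analysis on $k$ and on the clique number of each $G[A_i]$; the crucial point you are missing is that in several cases one \emph{must} put positive weight on selected vertices of $B_i$ (using Lemma~\ref{lem:Oli's_lemma} on $\{v_0\}\cup A_i$ when $A_i$ is a clique, or on carefully chosen distinguishers otherwise) precisely so that $\OPT(H,c_H)$ is large enough relative to $c_H(V(H))$. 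The weight on $v_0$ is also not uniformly $d(v_0)-1$; it is tuned case by case (and in Case~2 it is obtained by solving a small system of inequalities). Your minimality trick---that $X$ misses at least one vertex of $\{v_0\}\cup N(v_0)$---is indeed used in the paper, but it only buys a saving of $1$; the rest of the inequality has to come from the weight design.
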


\begin{proof}
Notice that since $G$ is $C_4$-free, the sets $B_i$ are pairwise disjoint. 

In all cases except in one sporadic case (part of Case 1.3 below), we let $c_H(v) := 1$ for all $v \in N(v_0)$, that is, we put unit weight on these vertices. The weights on the vertices in $\{v_0\} \cup \bigcup_{i=1}^k B_i$ will be determined later. 

Let $X$ denote a minimal hitting set of $G$. We wish to show that \eqref{eq:alpha_good} always holds for our choice of weights and $\alpha = 2$. We split the discussion into two cases according to the number of components of $G[N(v_0)]$. Each of these cases is split into several subcases according to the structure of the induced subgraphs $G[A_i]$, $i \in [k]$.

In all the cases, we make sure that the weight on $v_0$ is at least $1$, and hence
$$
\sum_{v \in X \cap V(H)} c_H(v) \leqslant \sum_{v \in V(H)} c_H(v) - 1\,.
$$
This follows from the assumption that $X$ is minimal: $X$ has to exclude at least one of the vertices of $\{v_0\} \cup N(v_0)$, and each of these vertices has weight at least~$1$. In order to prove $2$-goodness, it suffices then to show that $c_H(V(H)) \leqslant 2 \OPT(H,c_H) + 1$.\medskip

\noindent \emph{Case 1. $k = 1$.} 
Then $A_1 = N(v_0)$.\medskip

\noindent \emph{Case 1.1. $A_1$ is a clique.} We let $c_H(v_0) := 1$ and use Lemma~\ref{lem:Oli's_lemma} on the clique $C = \{v_0\} \cup A_1$ and distinguishing set $D = B_1$ to define weights on $B_1$. We get $c_H(V(H)) = 2|C| - 1$ and $\OPT(H,c_H) \geqslant |C| - 1$, and thus $c_H(V(H)) \leqslant 2 \OPT(H,c_H) + 1$.

\medskip

\noindent \emph{Case 1.2. $A_1$ is not a clique and $G[A_1]$ has clique number $2$.} If $|A_1| \geqslant 4$, we let $c_H(v_0) := |A_1| - 3 \geqslant 1$ and $c_H(v) := 0$ for $v \in B_1$. Then $\OPT(H,c_H) \geqslant |A_1| - 2$. This can be seen as follows. Let $Y$ denote a minimum weight hitting set of $(H,c_H)$. Either $Y$ contains $v_0$ and at least one vertex of $A_1$, or $Y$ does not contain $v_0$ and $A_1 \setminus Y$ is a clique. In both cases the weight of $Y$ is at least $|A_1| - 2$. We have $c_H(V(H)) = 2|A_1| - 3 \leqslant 2 \OPT(H,c_H) + 1$.

Otherwise, $|A_1| = 3$ and $G[A_1]$ is a $P_3$. Let $v_1$ denote the middle vertex of this $P_3$. Since $G$ is twin-free, $v_0$ and $v_1$ are not true twins. Thus, there exists a vertex $v_2 \in B_1$ that sees $v_1$ and not $v_0$. We put unit weights on $v_0$ and $v_2$, and zero weights on the vertices of $B_1 \setminus \{v_2\}$. We get $c_H(V(H)) = 5 \leqslant 2 \OPT(H,c_H) + 1$.
\medskip

\noindent \emph{Case 1.3. $A_1$ is not a clique and $G[A_1]$ has clique number $3$.} First, assume that $|A_1| \geqslant 6$ and the minimum size of a hitting set of $G[A_1]$ is at least $2$. We let $c_H(v_0) := |A_1| - 5 \geqslant 1$ and $c_H(v) := 0$ for $v \in B_1$. By an argument similar to that used in Case 1.2, we have $\OPT(H,c_H) \geqslant |A_1| - 3$. Then $c_H(v(H)) = 2|A_1| - 5 \leqslant 2 \OPT(H,c_H) + 1$.

Second, assume that there is a vertex $v_1$ that is a hitting set of $G[A_1]$. 
Because $G[A_1]$ is connected, not a clique, and does not contain any $4$-clique, one can check that the following holds for the graph $G[A_1]$: (1) $v_1$ has no true twin, (2) every pair of true twins lie in a triangle, (3) every triangle contains a pair of true twins, and (4) every two pairs of true twins are vertex-disjoint and there is no edge between them.  

There is at least one pair of true twins in $G[A_1]$ (since $G[A_1]$ has a triangle), and each pair of true twins in $G[A_1]$ is distinguished in $G$ by some vertex in $B_1$. 
Moreover, every two such pairs are distinguished by distinct vertices in $B_1$, since $G$ is $C_4$-free. 
So there is a nonempty set $B'_1 \subseteq B_1$ with the following properties: (i) every pair of true twins in $G[A_1]$ has a distinguisher in $B'_1$, (ii) there are $|B'_1|$ vertex-disjoint induced $P_3$'s with one endvertex in $B'_1$ and the other two vertices in $A_1 \setminus \{v_1\}$.

Assume for now that $|A_1| - |B'_1| - 3 \geqslant 1$. Then, we put a weight of $|A_1| - |B'_1| - 3$ on $v_0$, unit weights on the vertices of $B'_1$ and zero weights on the vertices of $B_1 \setminus B'_1$. Consider a minimum weight hitting set $Y$ of $(H,c_H)$. Either $Y$ contains $v_0$ and at least $1 + |B'_1|$ further vertices in $A_1 \cup B'_1$, or $Y$ does not contain $v_0$ and contains at least $|A_1| - 2$ vertices in $A_1 \cup B'_1$. Therefore, we have $\OPT(H,c_H) \geqslant |A_1| - 2$ and $c_H(V(H)) = 2|A_1| - 3 \leqslant 2 \OPT(H,c_H) + 1$.

Otherwise, $|A_1| - |B'_1| - 3 \leqslant 0$ and using $|A_1| \geqslant 4$, $|B'_1| \geqslant 1$ and $|A_1| \geqslant 2|B'_1| + 1$, we have $(|A_1|,|B'_1|) \in \{(4,1),(5,2)\}$. In both cases, $A_1$ contains a vertex $v_2$ (possibly $v_2 = v_1$) that sees every vertex in $A_1 \setminus \{v_2\}$, in addition to $v_0$. Since $v_0$ has maximum degree, $v_2$ has exactly the same neighbors as $v_0$, and is thus a true twin of $v_0$, a contradiction.

Finally, the last case to consider is when the minimum size of a hitting set in $G[A_1]$ is at least $2$ and $|A_1| \leq 5$. 
Using that $G[A_1]$ is $C_4$-free, one can check that $|A_1| = 5$ in this case, and that the minimum size of a hitting set in $G[A_1]$ is exactly $2$. 
Then the maximum degree in $G[A_1]$ is at most $3$ since otherwise by maximality of its degree, $v_0$ would have a true twin in $A_1$. Since $G[A_1]$ contains at least one triangle and is $C_4$-free, this leaves only one possibility: $G[A_1]$ is a \emph{bull}, that is, a triangle with two extra vertices of degree~$1$, say $v_1$ and $v_2$, each seeing a different vertex in the triangle. We increase the weight of one of these vertices to $2$, say $v_1$, put a unit weight on $v_0$, an zero weights on $B_1$. Then $\OPT(H,c_H) = 3$ and $c_H(V(H)) = 7 \leqslant 2 \OPT(H,c_H) + 1$.\medskip

\noindent \emph{Case 2. $k \geqslant 2$.} In this case the weight on $v_0$ is set implicitly. Remember that we require
\begin{align}
\label{eq:c_H(v_0)_positive} c_H(v_0) &\geqslant 1\,.
\end{align}

For $i \in [k]$, we let $\OPT_i$ denote the minimum weight of a hitting set of $H[A_i \cup B_i]$, and $\OPT'_i$ denote the minimum weight of a hitting set of $H[\{v_0\} \cup A_i \cup B_i]$ not containing $v_0$. Notice that these quantities depend on the weight function $c_H$, which is not fully determined at this point. 

We claim that the following lower bound holds on $\OPT(H,c_H)$, regardless of how $c_H(v)$ is chosen for $v \in \{v_0\} \cup \bigcup_{i=1}^k B_i$:
$$
\OPT(H,c_H) \geqslant \min \left(\left\{c_H(v_0) + \sum_{i} \OPT_i \right\} \cup \left\{\sum_{i \neq j} |A_i| + \OPT'_j \mid j \in [k]\right\}\right)\,.
$$

In order to verify that this claim is true, consider a hitting set $Y$ of $H$. If $Y$ contains $v_0$, then $Y \cap (A_i \cup B_i)$ is a hitting set of $G[A_i \cup B_i]$ for each $i \in [k]$. In this case, $c_H(Y) \geqslant c_H(v_0) + \sum_{i} \OPT_i$. Otherwise, $Y$ does not contain $v_0$. Then, there exists an index $j \in [k]$ such that $Y$ contains $A_i$ for all $i \neq j$. Moreover, $Y \cap (A_j \cup B_j)$ is a hitting set of $G[\{v_0\} \cup A_i \cup B_i]$ not containing $v_0$. In this case, $c_H(Y) \geqslant \sum_{i \neq j} |A_i| + \OPT'_j$. 

Thanks to the above lower bound on $\OPT(H,c_H)$, it suffices to satisfy the following $1 + k$ inequalities in order to guarantee that $(H,c_H)$ is $2$-good (remember that we put unit weights over the $A_i$'s, thus $c_H(A_i) = |A_i|$ for every $i \in [k]$):
\begin{align}
\nonumber c_H(v_0) + \sum_{i} (|A_i| + c_H(B_i))
&\leqslant 2 \left(c_H(v_0) + \sum_{i} \OPT_i \right) + 1\\
\label{eq:LB_c_H(v_0)}\iff c_H(v_0) &\geqslant \sum_{i} \left(|A_i| + c_H(B_i) - 2 \, \OPT_i \right) - 1
\end{align}
and, for all $j \in [k]$,
\begin{align}
\nonumber c_H(v_0) + \sum_{i} (|A_i| + c_H(B_i))
&\leqslant 2 \left(\sum_{i \neq j} c_H(A_i) + \OPT'_j \right) + 1\\
\label{eq:UB_c_H(v_0)}\iff c_H(v_0) &\leqslant \sum_{i \neq j} (|A_i| - c_H(B_i)) + 2\,\OPT'_j - |A_j| - c_H(B_j) + 1\,.
\end{align}
By eliminating the variable $c_H(v_0)$ from the system \eqref{eq:c_H(v_0)_positive}--\eqref{eq:UB_c_H(v_0)}, we get the following $2k$ inequalities not involving $c_H(v_0)$. For all $j \in [k]$:
\begin{align}
\label{eq:one}
|A_j| + c_H(B_j) &\leqslant \sum_{i \neq j} (\OPT_i - c_H(B_i)) + \OPT_j + \OPT'_j + 1
\end{align}
and
\begin{align}
\label{eq:two}
|A_j| + c_H(B_j) &\leqslant \sum_{i \neq j} (|A_i| - c_H(B_i)) +  2\,\OPT'_j\,.
\end{align}
If \eqref{eq:one} and \eqref{eq:two} are satisfied for all $j \in [k]$, then $(H,c_H)$ is $2$-good.\medskip

In order to simplify these constraints, we add the extra requirements that $c_H(B_i) \leqslant \OPT_i$ and $c_H(B_i) \leqslant |A_i| - 1$ for all $i \in [k]$. Since $k \geqslant 2$ and $\OPT'_j \geqslant \OPT_j$, both \eqref{eq:one} and \eqref{eq:two} follow if, for all $j \in [k]$:
\begin{equation}
\label{eq:target}
|A_j| + c_H(B_j) \leqslant 1 + \OPT_j +\OPT'_j\,.
\end{equation}
Fix any $j \in [k]$. We set the weights on the vertices of $B_j$ by inspecting the structure of the induced graph $H[A_j]$. We consider three subcases, see below. In each of these cases, it is straightforward to check that the two extra requirements are satisfied for $i = j$.\medskip

\noindent \emph{Case 2.1. $A_j$ is a clique.} By Lemma~\ref{lem:Oli's_lemma}, we may set weights on $B_j$ to have $c_H(B_j) = |A_j| - 1$ and $\OPT_j = |A_j| - 1$. Now $\OPT'_j \geqslant \OPT_j = |A_j| - 1$, so that inequality \eqref{eq:target} is satisfied, since
$$
|A_j| + c_H(B_j) = |A_j| + |A_j| - 1 = 1 + (|A_j| - 1) + (|A_j| - 1) 
\leqslant 1 + \OPT_j +\OPT'_j\,.
$$

\noindent \emph{Case 2.2. $A_j$ is not a clique and $G[A_j]$ has clique number $2$.} In this case, we put zero costs on $B_j$. We get $\OPT_j \geqslant 1$ because $A_j$ is not a clique and also $\OPT'_j \geqslant |A_j| - 2$, so that
$$
|A_j| + c_H(B_j) = |A_j| = 1 + 1 + (|A_j| - 2) \leqslant 1 + \OPT_j +\OPT'_j\,.
$$

\noindent \emph{Case 2.3. $A_j$ is not a clique and $G[A_j]$ has clique number $3$.} If the minimum size of a hitting set in $G[A_j]$ is at least $2$, we put zero weights on $B_j$. Thus \eqref{eq:target} is satisfied, since then $\OPT_j \geqslant 2$ and
$$
|A_j| + c_H(B_j) = |A_j| \leqslant 1 + 2 + |A_j| - 3
\leqslant 1 + \OPT_j + \OPT'_j\,.
$$

Now, assume that there exists some vertex $v_1$ that hits all the induced $3$-paths in $A_j$. As in Case 1.3, we see that there is a set $B'_j \subseteq B_j$ with the following properties: (i) every pair of true twins in $G[A_j]$ has a distinguisher in $B'_j$, (ii) among the distinguishing $P_3$'s defined by the vertices in $B'_j$, there are $|B'_j|$ vertex-disjoint $P_3$'s. 

We put unit weights on the vertices of $B'_j$ and zero weight on the vertices of $B_j \setminus B'_j$. We get $\OPT_j \geqslant |B'_j| + 1$ since a hitting set in $G[A_j \cup B_j]$ has to have one vertex on each of the $|B'_j|$ vertex-disjoint distinguishing $P_3$'s but this is not enough to hit all the induced $P_3$'s. And also $\OPT'_j \geqslant |A_j| - 2$ since every triangle in $G[A_j]$ has one pair of true twins, which is distinguished by some vertex of $B'_j$. We have
$$
|A_j| + c_H(B_j) = |A_j| + |B'_j| \leqslant 1 + (|A_j| - 2) + (|B'_j| + 1)
\leqslant 1 + \OPT_j + \OPT'_j\,.
$$
\end{proof}

\section{Algorithm} \label{sec:algo}

Our $9/4$-approximation algorithm is described below, see Algorithm~\ref{algo}. Although we could have presented it as a primal-dual algorithm, we chose to present it within the local ratio framework in order to avoid some technicalities, especially those related to the elimination of true twins. 

The following lemma makes explicit a simple property of \cVD{} 
that is key when using the local ratio technique. 
This property is common to many minimization problems, and is 
often referred to as the {\em Local Ratio Lemma}; see e.g.\ the survey of 
Bar-Yehuda, Bendel, Freund, and Rawitz~\cite{Bar-YehudaEtAlSurvey}. 

\begin{lemma}[Local Ratio Lemma for \cVD{}]
\label{lem:localratiolemma} 
Let $(G, c)$ be a weighted graph with $c$ the sum of two cost functions $c'$ and $c''$, 
and let $\alpha \geqslant 1$. 
If $X$ is a hitting set of $G$
such that $c'(X) \leqslant \alpha \cdot \OPT(G, c')$ 
and $c''(X) \leqslant \alpha \cdot \OPT(G, c'')$, then 
$c(X)  \leqslant \alpha \cdot \OPT(G, c)$. 
\end{lemma}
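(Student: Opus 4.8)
The plan is to use the single structural fact that makes the local ratio technique work: the family of feasible solutions to \cVD{} in $G$ — the hitting sets of $G$ — depends only on $G$, not on the cost function. So first I would fix an arbitrary minimum-cost hitting set $X^{*}$ of $G$ with respect to $c$, so that $c(X^{*}) = \OPT(G,c)$. Since $X^{*}$ is a hitting set of $G$, it is also a feasible solution for the instances $(G,c')$ and $(G,c'')$, and therefore $\OPT(G,c') \leqslant c'(X^{*})$ and $\OPT(G,c'') \leqslant c''(X^{*})$. Adding these two inequalities and using $c = c' + c''$ yields the key ``superadditivity'' estimate
\[
\OPT(G,c') + \OPT(G,c'') \;\leqslant\; c'(X^{*}) + c''(X^{*}) \;=\; c(X^{*}) \;=\; \OPT(G,c)\,.
\]

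Second, I would simply chain the two hypotheses with this estimate. By assumption $c'(X) \leqslant \alpha \cdot \OPT(G,c')$ and $c''(X) \leqslant \alpha \cdot \OPT(G,c'')$, and since $c = c' + c''$ we get
\[
c(X) \;=\; c'(X) + c''(X) \;\leqslant\; \alpha\bigl(\OPT(G,c') + \OPT(G,c'')\bigr) \;\leqslant\; \alpha \cdot \OPT(G,c)\,,
\]
which is exactly the claim.

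There is no real obstacle here; the lemma is essentially a one-line consequence of the identity $c = c'+c''$. The only point requiring a moment's care is to note that $c'$ and $c''$ are nonnegative (they map $V(G)$ into $\R_+$), so that summing the $c'$- and $c''$-bounds is legitimate and $X^{*}$ being feasible for each subinstance genuinely bounds $\OPT(G,c')$ and $\OPT(G,c'')$ from above; combined with the cost-independence of feasibility, nothing else is needed. (The hypothesis $\alpha \geqslant 1$ is not used in the derivation itself — it is present only because an approximation ratio below $1$ would be meaningless.)
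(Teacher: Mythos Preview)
Your proof is correct and follows exactly the same approach as the paper: both fix an optimal hitting set $X^{*}$ for $(G,c)$, use its feasibility for the two subinstances to obtain the superadditivity inequality $\OPT(G,c') + \OPT(G,c'') \leqslant \OPT(G,c)$, and then combine this with the two hypotheses via $c = c' + c''$. Your remark that nonnegativity of $c'$ and $c''$ is needed is actually superfluous --- the inequality $\OPT(G,c') \leqslant c'(X^{*})$ holds simply because $X^{*}$ is feasible and $\OPT$ is a minimum over feasible sets, regardless of signs --- but this does not affect correctness.
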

\begin{proof} 
Since $c(X)=c'(X) + c''(X)$, it is enough to show that $\OPT(G, c') + \OPT(G, c'') \leqslant \OPT(G, c)$. 
To see this, let $X^*$ be a minimum weight hitting set for $(G, c)$. 
Then $\OPT(G,c) = c(X^*) = c'(X^*) + c''(X^*) \geqslant \OPT(G, c') + \OPT(G, c'')$. 
\end{proof}

\begin{algorithm}\caption{$\cVDapx(G,c)$}\label{algo} 
\begin{algorithmic}[1]
\REQUIRE $(G,c)$ a weighted graph
\ENSURE $X$ an inclusionwise minimal hitting set of $G$
\IF{$G$ is a disjoint union of cliques}
	\STATE{$X \leftarrow \varnothing$}
\ELSIF{there exists $u \in V(G)$ with $c(u) = 0$}
	\STATE{$G' \leftarrow G - u$}
	\STATE{$c'(v) \leftarrow c(v)$ for $v \in V(G')$}
	\STATE{$X' \leftarrow \cVDapx(G',c')$ \label{step:zero_cost}}
	\STATE{$X \leftarrow X'$ if $X'$ is a hitting set of $G$; $X \leftarrow X' \cup \{u\}$ otherwise}
\ELSIF{there exist true twins $u, u' \in V(G)$}
	\STATE{$G' \leftarrow G - u'$}
	\STATE{$c'(v) \leftarrow c(u) + c(u')$ for $v = u$; $c'(v) \leftarrow c(v)$ for $v \in V(G') \setminus \{u\}$}
	\STATE{$X' \leftarrow \cVDapx(G',c')$ \label{step:true_twins}}
	\STATE{$X \leftarrow X'$ if $X'$ does not contain $u$; $X \leftarrow X' \cup \{u'\}$ otherwise}
\ELSE
	\STATE{pick the first $(H,c_H)$ in $\calH(G)$ \label{step:find_good_H}}
	\STATE{$\lambda^* \leftarrow \max \{\lambda \mid \forall v \in V(H) : c(v) - \lambda c_H(v) \geqslant 0\}$}
	\STATE{$G' \leftarrow G$}
	\STATE{$c'(v) \leftarrow c(v) - \lambda^* c_H(v)$ for $v \in V(H)$; $c'(v) \leftarrow c(v)$ for $v \in V(G) \setminus V(H)$}
	\STATE{$X \leftarrow \cVDapx(G',c')$ \label{step:increase_dual}}
\ENDIF
\STATE{return $X$}
\end{algorithmic}
\end{algorithm}

Algorithm~\ref{algo} uses an ordered list $\calH(G)$ of weighted induced subgraphs $(H,c_H)$ of $G$ as defined in Lemmas~\ref{lem:C_4}, \ref{lem:5-clique} and \ref{lem:2nd_neighborhood}. We order the weighted induced subgraphs $(H,c_H)$ in $\calH(G)$ in order to make sure that the hypotheses of the corresponding lemma are satisfied when $(H,c_H)$ is used. The first elements of the list are induced $C_4$'s (if any), next come the induced $K_5$'s (if any) each of them taken together with a distinguishing set, and finally the second neighborhood of any maximum degree vertex $v_0$. Notice that the list $\calH(G)$ is always nonempty and of polynomial size. This ensures that Algorithm~\ref{algo} has polynomial complexity.

We are now ready to prove our main result.

\begin{proof}[Proof of Theorem~\ref{thm:main}]
By induction on the number of recursive calls, we prove the following claim:
\begin{quote} \it
$(\star)$ The set $X$ output by Algorithm~\ref{algo} on input $(G,c)$ is an inclusionwise minimal hitting set of $G$ and $c(X) \leqslant \frac{9}{4} \cdot \OPT(G,c)$.
\end{quote}
If the algorithm does not call itself, then it returns the empty set and in this case claim $(\star)$ trivially holds. Now assume that the algorithm calls itself at least once and that the output $X'$ of the recursive call is an inclusionwise minimal hitting set of $G'$ that satisfies $c'(X') \leqslant \frac{9}{4} \cdot \OPT(G',c')$. There are three cases to consider.\smallskip

\noindent \emph{Case 1:} The recursive call occurs at Step~\ref{step:zero_cost}. Then we have $c(X) = c'(X')$ and $\OPT(G,c) = \OPT(G',c')$ because $(G',c')$ is simply $(G,c)$ with one zero-cost vertex removed. By construction, $X$ is an inclusionwise minimal hitting set of $G$. Moreover, by what precedes, $c(X) = c'(X') \leqslant \frac{9}{4} \cdot \OPT(G',c') = \frac{9}{4} \cdot \OPT(G,c)$.\smallskip

\noindent \emph{Case 2:} The recursive call occurs at Step~\ref{step:true_twins}. Again, $X$ is an inclusionwise minimal hitting set of $G$ and $c(X) = c'(X') \leqslant \frac{9}{4} \cdot \OPT(G',c') = \frac{9}{4} \cdot \OPT(G,c)$, where the last equality holds by Lemma~\ref{lem:true_twins}.\smallskip

\noindent \emph{Case 3:} The recursive call occurs at Step~\ref{step:increase_dual}. In this case, $G = G'$ and $X = X'$, thus $X$ is automatically an inclusionwise minimal hitting set of $G$. Let $c''$ denote the weighting $c_H$ extended to $V(G)$ by letting $c''(v) := 0$ for $v \in V(G) \setminus V(H)$. We have $c'(X) \leqslant \frac{9}{4} \cdot \OPT(G,c')$ by induction and $\lambda^* c''(X) \leqslant \frac{9}{4} \cdot \OPT(G,\lambda^* c'')$ since all the weighted induced subgraphs $(H,c_H)$ in $\calH(G)$ are $9/4$-good in $G$ (see Lemmas~\ref{lem:C_4}, \ref{lem:5-clique} and \ref{lem:2nd_neighborhood}). Because $c = c' + \lambda^* c''$, Lemma~\ref{lem:localratiolemma} implies $c(X) \leqslant \frac{9}{4} \cdot \OPT(G,c)$.
\end{proof}

\section{Conclusion} \label{sec:conclusion}

In this paper we presented a $9/4$-approximation algorithm for the \cVD{} problem, based on the local ratio technique. The main idea underlying the algorithm is that in a twin-free, $(C_4,K_5)$-free graph, one can define weights on the vertices of the second neighborhood of any maximum degree vertex in order to guarantee a local ratio of at most $2$. Moreover, the input graph can be made twin-free and $C_4$-free without worsening the approximation ratio beyond~$2$. Making the graph $K_5$-free is what causes the approximation ratio to increase to~$9/4$. If the input graph is $K_5$-free, our algorithm is in fact a $2$-approximation algorithm.

Furthermore, looking closely at the proof of Lemma~\ref{lem:2nd_neighborhood}, we see that one can also obtain a $2$-approximation algorithm for diamond-free graphs. This is due to the fact that, if $G$ is diamond-free, the open neighborhood of any vertex is a union of cliques.

\begin{theorem}\label{thm:K_5-free_diamond-free}
There is a $2$-approximation algorithm for \cVD{} in the class of $K_5$-free graphs, and in the class of diamond-free graphs.
\end{theorem}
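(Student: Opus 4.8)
The plan is to show that Algorithm~\ref{algo}, essentially unchanged, already achieves ratio $2$ in each of the two graph classes, by verifying that all weighted induced subgraphs placed in the list $\calH(G)$ are $2$-good rather than merely $9/4$-good. Recall from the proof of Theorem~\ref{thm:main} that the only source of a ratio worse than $2$ is Lemma~\ref{lem:5-clique}, which handles induced $K_5$'s; Lemmas~\ref{lem:C_4} and~\ref{lem:2nd_neighborhood} already give $2$-goodness. So for the $K_5$-free case the statement is immediate: when $G$ is $K_5$-free, the reductions of Steps~\ref{step:zero_cost} and~\ref{step:true_twins} preserve $K_5$-freeness (deleting a vertex, or identifying true twins, cannot create a new clique), so the graph passed to Step~\ref{step:find_good_H} is always twin-free, $C_4$-free \emph{and} $K_5$-free, whence $\calH(G)$ never contains a $K_5$-plus-distinguishing-set item and consists only of the $C_4$'s and the second-neighborhood subgraph of Lemma~\ref{lem:2nd_neighborhood}. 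Running the proof of Theorem~\ref{thm:main} verbatim with $9/4$ replaced by $2$ throughout then yields claim $(\star)$ with constant $2$.

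For the diamond-free case I would modify the list $\calH(G)$ slightly. The key observation, already noted in the conclusion, is that if $G$ is diamond-free then for every vertex $v$ the open neighborhood $N(v)$ induces a disjoint union of cliques: two adjacent vertices of $N(v)$ together with $v$ and a common neighbor in $N(v)$ of only one of them would form a diamond. First I would argue that diamond-freeness is preserved by both reductions (vertex deletion trivially; for true-twin identification one checks that a diamond in $G'$ pulls back to a diamond or a $K_4$ in $G$, and a $K_4$ containing $u$ gives a diamond in $G$ on $u,u'$ and two others—so $G$ diamond-free forces $G'$ diamond-free). Hence at Step~\ref{step:find_good_H} the graph is twin-free, $C_4$-free and diamond-free. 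Now instead of asking for $K_5$-freeness I would run the second-neighborhood construction of Lemma~\ref{lem:2nd_neighborhood} directly: since each component $A_i$ of $G[N(v_0)]$ is a clique, we are always in Case~1.1 (if $k=1$) or Case~2.1 (for every $j$ when $k\ge 2$), and the weighting given there, built from Lemma~\ref{lem:Oli's_lemma}, is $2$-good with no need to invoke Lemma~\ref{lem:5-clique} at all. One must double-check that the $C_4$-free items still come first in $\calH(G)$ so that $G[N(v_0)]$-components are genuinely cliques when Lemma~\ref{lem:2nd_neighborhood} is applied; this is exactly the ordering already used.

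With $\calH(G)$ thus restricted to $C_4$'s followed by the (now always Case~1.1/2.1) second-neighborhood subgraph, every element of $\calH(G)$ is $2$-good in $G$, the list is still nonempty and of polynomial size, and the induction of the proof of Theorem~\ref{thm:main} goes through with $9/4$ replaced by $2$. The main obstacle, such as it is, is the bookkeeping to confirm that diamond-freeness (like twin-freeness and $C_4$-freeness) really is maintained under the zero-cost-vertex deletion and the true-twin identification, so that Lemma~\ref{lem:2nd_neighborhood} is invoked only on graphs whose neighborhoods are unions of cliques; once that is checked, both halves of the theorem follow by a routine re-reading of the existing analysis with the constant $2$ in place of $9/4$.
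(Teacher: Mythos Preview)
Your proposal is correct and follows exactly the line the paper sketches: for $K_5$-free inputs the $K_5$ item never enters $\calH(G)$, and for diamond-free inputs every component $A_i$ of $G[N(v_0)]$ is a clique, so only Cases~1.1 and~2.1 of Lemma~\ref{lem:2nd_neighborhood} ever apply, and those cases do not use the $K_5$-freeness hypothesis. One small correction to your bookkeeping: the true-twin step in Algorithm~\ref{algo} is not an identification but the deletion $G' = G - u'$, so diamond-freeness is preserved trivially because $G'$ is an induced subgraph of $G$; your pull-back argument is unnecessary and in fact wrong as stated (if $\{u,a,b,c\}$ induces a $K_4$ then $\{u,u',a,b\}$ also induces a $K_4$, not a diamond, since the true twin $u'$ is adjacent to $a$ and $b$).
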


We note that Theorem~\ref{thm:K_5-free_diamond-free} can be seen as a generalization of the fact that there is a $2$-approximation for \cVD{} in triangle-free graphs, a result that was used by You, Wang, and Cao~\cite{YouEtAl} in their $5/2$-approximation algorithm for (unweighted) \cVD{}.

\section*{Acknowledgments}
We thank the anonymous referees for their careful reading of the paper and helpful remarks.

\bibliographystyle{amsplain}
\bibliography{bibliography}

\end{document}